\theoremstyle{thmstyleone}%
\newtheorem{theorem}{Theorem}
\newtheorem{proposition}[theorem]{Proposition}%
\newtheorem{remark}[theorem]{Remark}
\theoremstyle{thmstyletwo}%
\newtheorem{example}{Example}%
\theoremstyle{thmstylethree}%
\newtheorem{definition}{Definition}%
\newtheorem{lema}{Lemma}
\newcommand{\Ln}{L_n}
\newcommand{\setIntervals}[1]{\mathcal{I}(L_#1)}
\newcommand{\orderIntervals}{\preceq}
\newcommand{\orderIntervalsEstrict}{\prec}
\newcommand{\subsetdfn}{\mathcal{D}_1^{L_n\rightarrow Y_m}}
\newcommand{\setdfn}{\mathcal{D}_1^{L_n}}
\newcommand{\orderDfn}{\prec_{\Delta^\downarrow_\delta}}
\newcommand{\orderDfnEq}{\preceq_{\Delta^\downarrow_\delta}}
\newcommand{\setGdfn}{\mathrm{SDFN}}
\begin{document}

\title[]{An Efficient Computational Framework for Discrete Fuzzy Numbers Based on Total Orders}

\author[1]{\fnm{Arnau} \sur{Mir}}\email{arnau.mir@uib.es}

\author[1]{\fnm{Alejandro} \sur{Mus}}\email{alejandro.mus@uib.cat}

\author*[1]{\fnm{Juan Vicente} \sur{Riera}}\email{jvicente.riera@uib.es}

\equalcont{These authors contributed equally to this work.}

\affil*[1]{\orgdiv{Soft Computing, Image Processing and Aggregation Research Group (SCOPIA), Artificial Intelligence Research, Institute of the Balearic Islands (IAIB), Health Research Institute of the Balearic Islands (IdISBa)}, \orgname{Math and Comp. Sci. Department, University of Balearic Islands}, \orgaddress{\country{Spain}}}

\abstract{

Discrete fuzzy numbers, and in particular those defined over a finite chain 
\( L_n = \{0, \ldots, n\} \), have been effectively employed to represent 
linguistic information within the framework of fuzzy systems. Research on 
total (admissible) orderings of such types of fuzzy subsets, and specifically 
those belonging to the set \( \subsetdfn \) consisting of discrete fuzzy 
numbers \( A \) whose support is a closed subinterval of the finite chain 
\( L_n = \{0, 1, \ldots, n\} \) and whose membership values \( A(x) \), for 
\( x \in L_n \), belong to the set 
\( Y_m = \{ 0 = y_1 < y_2 < \cdots < y_{m-1} < y_m = 1 \} \), has facilitated the 
development of new methods for constructing logical connectives, based on a bijective  
function, called \textit{pos function},  that determines the position of each \( A \in \subsetdfn \).
For this reason, in this work we revisit the problem by introducing algorithms that exploit the combinatorial structure of total (admissible) orders to compute the \textit{pos} function and its inverse with exactness. The proposed approach achieves a complexity of $\mathcal{O}(n^{2}  m  \log n)$, which is quadratic in the size of the underlying chain ($n$) and linear in the number of membership levels ($m$). The key point is that the dominant factor is $m$, ensuring scalability with respect to the granularity of membership values.
The 
results demonstrate that this formulation substantially reduces computational 
cost and enables the efficient implementation of algebraic operations---such 
as aggregation and implication---on the set of discrete fuzzy numbers.

}

\keywords{Discrete fuzzy numbers,
Admissible orders,
Aggregation functions,
Fuzzy implication functions, 
Computational efficiency}

\maketitle

\section{Introduction}

The representation and manipulation of uncertainty remain central challenges in computational intelligence. Since Zadeh’s seminal formulation of fuzzy set theory~\citep{zadeh1975a, zadeh1975b, zadeh1975c}, multiple frameworks have been proposed to formalize linguistic imprecision and allow reasoning with words. Among them, the paradigm of \textit{Computing with Words} (CWW) has emerged as a powerful tool to model qualitative and imprecise information, finding applications in decision making, control systems, and knowledge representation \cite{zadehcww2012, GUPTA2022921, Zadeh1999}. In this regard, different methods based on multigranular fuzzy linguistics have been proposed in the literature \cite{MORENTEMOLINERA201549,Rodri_hesitant_2012}, establishing different categories depending on the fuzzy modeling used for linguistic expressions. Among these linguistic models, the one based on discrete fuzzy numbers \cite{voxman2001} stands out \cite{MASSANET2014,RIERA2015383} as well as their extensions, such as discrete or mixed fuzzy $Z$-numbers \cite{MASSANET202035,Aliev2015}. Among the key advantages of this linguistic model are the following \cite{morente2015, RIERA2015383}:
(i) it enables experts to express their preferences flexibly using different levels of granularity;
(ii) it eliminates the need to transform linguistic expressions prior to aggregation; and
(iii) it preserves information integrity throughout the aggregation process. 

However, the different methods proposed in the literature for aggregating or making inferences with this type of linguistic information are based on the use of aggregation or implication functions defined in a finite chain and on the use of the classical partial order interval  \cite{RIERA2013131,casasnovas2011,riera2012,riera2014}. The use of this type of partial order, for example in decision-making problems, implies that there may be expert opinions that are not comparable, which can pose a challenge when making a final decision. Therefore, the literature contains extensive research on the study of total orders and their construction methods, approached either from a theoretical perspective or applied to various computational linguistic models \cite{MaBeBu2022,SANTANA202044,BUSTINCE2013,Mir-Fuentes2024}. This type of total order has also been investigated in the context of discrete fuzzy numbers \cite{riera2021} and has been applied in the construction of implication and aggregation functions on the finite set \( \subsetdfn \) \cite{Gonzalez2025,DeMiguel2024}, which is the subset of discrete fuzzy numbers \( A \) whose support is a closed subinterval of the finite chain \( L_n = \{ 0, 1, \ldots, n \} \) and whose membership values \( A(x) \) for \( x \in L_n \) belong to the set \( Y_m = \{ y_1 = 0, y_2, \ldots, y_m = 1 \} \), with \( 0 = y_1 < y_2 < \cdots < y_{m-1} < y_m = 1 \).

The method for constructing these operators is based on a bijective function, called the \( pos \) \textit{function}, which is defined from the set \( \subsetdfn \) to the finite chain \( L_k \), where \( k = \left| \subsetdfn \right| - 1 \). This function provides the position occupied by each discrete fuzzy number \( A \in \subsetdfn \), according to a fixed total order, and was used in \cite{Gonzalez2025} to define implication functions on the set $\subsetdfn$, based on discrete implication functions defined on the finite chain \( L_k \). Furthermore, it was demonstrated that there is a one-to-one correspondence between the set of implication functions defined on $\subsetdfn$   and the set of discrete implications defined on $L_k$, verifying that the main properties (exchange principle, identity principle, etc.) were preserved. Following a similar reasoning, the  work \cite{DeMiguel2024} extends this framework to aggregation functions, providing a coherent and unified algebraic approach for operations defined on discrete fuzzy numbers.
From a theoretical perspective, these studies are noteworthy because they propose a method for constructing logical connectives based on total orders rather than partial orders, as discussed in \cite{riera2012,riera2014}.

Despite their theoretical relevance, the direct computation of the position function (\(\mathrm{pos}\)) or its inverse (\(\mathrm{pos}^{-1}\)) is computationally expensive, since the cardinality of the set
\[
\left| \subsetdfn \right| = \binom{n + 2m - 2}{2m - 2}
\]
depends both on the number of elements in the finite chain \(L_n\) (corresponding to the number of linguistic labels \cite{MASSANET2014}) and on the number of membership values "$m$'' used to represent the discrete fuzzy numbers \cite{Gonzalez2025,Mir-Fuentes2024}. In the literature \cite{Bonissone1986,Herrera2008}, linguistic chains with an odd number of labels are typically used, with the remaining labels arranged symmetrically around the central one, and the granularity usually limited to 11 or, at most, 13.

Classical approaches require enumerating or sorting all dfns according to the total (admissible) order, resulting in exponential time and memory complexity with respect to the number of membership levels. This computational bottleneck becomes critical when these mappings are invoked repeatedly, as in the computation of aggregation or implication functions in practical fuzzy reasoning systems.

The present work tackles this problem by proposing \textit{efficient algorithms for computing the} $\mathrm{pos}$ \textit{and} $\mathrm{pos}^{-1}$ \textit{functions} without generating the entire space of discrete fuzzy numbers. The proposed methods exploit the combinatorial structure of discrete fuzzy numbers and the recursive properties of admissible orders to achieve polynomial time complexity while maintaining exactness. In particular, both the ranking and unranking processes —namely, the computation of the function $\mathrm{pos}$ and its inverse— are performed in $\mathcal{O}(n^2 m \log n)$ time, requiring only constant additional memory. This represents a significant improvement over brute-force enumeration methods.

The remainder of this paper is structured as follows. Section~\ref{sec:preliminaries} reviews the theoretical background of discrete fuzzy numbers and orders. Section~\ref{sec:motivation} analyzes the computational challenges related to the $\mathrm{pos}$ and $\mathrm{pos}^{-1}$ functions. Section~\ref{sec:cost_inv_pos} presents the proposed efficient algorithms and their complexity analysis. Section~\ref{sec:empirical} provides an empirical validation of the theoretical results derived in the preceding section. Finally, Section~\ref{sec:conclusions} summarizes the main contributions and outlines directions for future research.

\section{Preliminaries}\label{sec:preliminaries}

This section reviews the most relevant concepts and results that will be used throughout the work.
\subsection{Partial and total orders}
\label{sub:partial_total_orders}
First, let us recall some definitions regarding partial and admissible orders, as well as some illustrative examples of them. 

\begin{definition}\cite{Gra2011}
Let $A$ a set and $\mathcal{R}\subset A\times A$ a binary relation defined over $A$. $\mathcal{R}$ is an order if it fulfills the following properties:
\begin{itemize}
    \item $\mathcal{R}$ is reflexive, that is $x\mathcal{R}x$, $\forall x\in A.$
    \item $\mathcal{R}$ is anti-symmetric, that is, given $x,y\in A$, if $x\mathcal{R} y$ and $y\mathcal{R} x$, then $x=y.$
    \item $\mathcal{R}$ is transitive, that is, given $x,y,z\in A$, if $x\mathcal{R}y$ and $y\mathcal{R}z$, then $x\mathcal{R} z.$

If for all $x, y \in A$ it holds that $x \mathcal{R} y$ or $y \mathcal{R} x$, then the relation $\mathcal{R}$ is called a \textit{total (lineal)} order; otherwise, it is called a \textit{partial order}.
\end{itemize}
\end{definition}
A partially ordered set(poset for short) is an ordered pair $(X,\leq)$ consisting of a set $X$ and a partial order on $X$. In particular, when the order is total, $(X,\leq)$ is called a totally ordered set.

 Let \(\Ln=\{0,1,\dots,n\}\) be a finite chain endowed with the natural order. 
Let us denote by $\setIntervals{n}$ the set of closed intervals defined on $L_n$:
$$
\setIntervals{n} = \{[a,b],\mid a,b\in L_n\}.
$$

\begin{definition}\label{deforder}\cite{BUSTINCE2013}
The order $\preceq$ over $\setIntervals{n}$ is called an \textit{admissible order}, if it satisfies
\begin{description}
\item[(i)] $\preceq$ is a total (linear) order on $\setIntervals{n}$,
\item[(ii)] For any two intervals \([a, b], [c, d] \in \setIntervals{n}\), the relation \([a, b] \preceq [c, d]\) holds whenever \([a, b] \leq_2 [c, d]\), where \(\leq_2\) represents the standard partial order on intervals, defined by \([a, b] \leq_2 [c, d]\) if and only if \( a \leq c \) and \( b \leq d \).
\end{description}
\end{definition}
The following binary relations are classical examples of total orders \cite{BUSTINCE2013,Mir-Fuentes2024}:
\begin{description}
    \item[Lexicographic Order 1:] $[a, b] \leq_{lex1} [c, d]$ if and only if $a < c$ or ($a = c$ and $b \leq d$).
    \item[Lexicographic Order 2:] $[a, b] \leq_{lex2} [c, d]$ if and only if $b < d$ or ($b = d$ and $a \leq c$).
    \item[Xu and Yager order:] $[a, b] \leq_{XY} [c, d]$ if and only if $(a + b) < (c + d)$ or (($a + b) = (c + d)$ and $(b - a) \leq (d - c)$).
    \item[t-inc order:]
$[a,b] \leq_{\mathrm{t-inc}} [c,d] \iff a <c  \lor (a = c \land d \leq b).$
\end{description}
We would like to point out that the above definition is a readaptation of the original definition established by Bustince et al \cite{BUSTINCE2013} which considered closed intervals of the unit interval instead of closed intervals defined on a finite chain $L_n$. This idea has already been used in literature, as can be seen in \cite{riera2021,Mir-Fuentes2024,Gonzalez2025}.
\subsection{Discrete fuzzy numbers}
Next, we outline the concept of a discrete fuzzy number and introduce the relevant notation.

\begin{definition}[\textbf{\cite{voxman2001}}]\label{dfn}
A fuzzy subset $A$ of $ \mathbb{R}$ with membership mapping $A:
\mathbb{R} \to [0,1]$ is called a \emph{discrete fuzzy number}, or \emph{dfn} for short, if
its support is finite, i.e., there exist $x_{1},...,x_{n}\in
\mathbb{R}$ with $x_{1}<x_{2}<...<x_{n}$ such that
$supp(A)=\{x_1,...,x_{n}\}$, and there are natural numbers $s, t$
with $1\leq s\leq t\leq n $ such that:
\begin{enumerate}
\item A($x_{i}$)=1 for all $i$ with $s\leq i\leq
t$. (\emph{core}) \item $A(x_{i})\leq A(x_{j})$ for all $i, j$ with  $1\leq i\leq j\leq s $. \item $A(x_{i})\geq
A(x_{j})$ for all $i, j$ with $t\leq i\leq j\leq n
$.
\end{enumerate}
A dfn \(A\)  with support $supp(A)=\{x_1,\ldots,x_n\}$ is denoted as \( A = \{ A(1)/x_1, \ldots, A(n)/x_n \} \).

\end{definition}

Recall that, for every \(\alpha \in (0,1]\), the \(\alpha\)-cut of the discrete fuzzy number \(A\) is given by
\[
A^\alpha = \left\{\, x_i \in \operatorname{supp}(A) \mid A(x_i) \geq \alpha \,\right\}.
\]

We will denote by \(\setdfn\) the set of all discrete fuzzy numbers (dfns) that have a closed subinterval of the finite chain \( L_n \) as their support. The importance of studying \( \setdfn \) lies in the capability of this family of discrete fuzzy numbers to act as linguistic expressions that accurately reflect an expert's opinions in a decision-making context \cite{MASSANET2014,RIERA2015383}.

Total and admissible orders on the set $\setdfn$ were investigated in \cite{riera2021}. Let us recall the essential ideas and concepts of its construction.

\begin{definition}[\textbf{\cite{riera2021}}]
Let \( A \in \setdfn \). Then, \( \alpha \in (0,1] \) is termed a relevant \( \alpha \)-level for \( A \) if there exists \( i \in \sup(A) \) such that \( A(i) = \alpha \).
\end{definition}

\begin{theorem}[\textbf{\cite{riera2021}}]
\label{ordrefuzzynumbers}

Let \( A, B \in \setdfn \) be two discrete fuzzy numbers, with their respective sets of relevant \(\alpha\)-levels given by
\[
S_A = \{ \alpha_1 < \cdots < \alpha_k = 1 \}, \quad \text{where } k \leq n + 1,
\]
\[
S_B = \{ \beta_1 < \cdots < \beta_m = 1 \}, \quad \text{where } m \leq n + 1.
\]
Let \( S_{AB} = S_A \cup S_B = \{ \gamma_1 < \gamma_2 < \cdots < \gamma_t = 1 \} \) be the union of both sets, where \( 1 \leq t \leq k + m - 1 \).

Let \( \orderIntervals \) be an admissible order on \( \setIntervals{n} \).

We define the binary relation:
\begin{itemize}
    \item \( A = B \) if and only if \( A^{\gamma_i} = B^{\gamma_i} \) for all \( i \in I = \{1, \ldots, t\} \).
    
    \item \( A \orderDfn B \) if \( A \neq B \) and there exists \( j \in I \) such that
    \[
    A^{\gamma_j} \orderIntervals B^{\gamma_j} \quad \text{and} \quad A^{\gamma_i} = B^{\gamma_i} \text{ for all } i > j.
    \]
    
    \item \( A \orderDfnEq B \) if \( A = B \) or \( A \orderDfn B \).
\end{itemize}

This binary relation \( \orderDfnEq \) is a total (admissible) order on \( \setdfn \).
\end{theorem}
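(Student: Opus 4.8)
The plan is to recognise $\orderDfnEq$ as a reverse‑lexicographic comparison (scanning the relevant levels from the top level $\gamma_t=1$ downwards) of the families of $\alpha$‑cuts, and then to read off each order axiom from the corresponding property of the admissible order $\orderIntervals$ on $\setIntervals{n}$. First I would fix the working form of the definition: for $A\neq B$, letting $\gamma^{*}$ be the largest level of $S_{AB}$ at which the $\alpha$‑cuts of $A$ and $B$ differ, $A\orderDfn B$ holds precisely when $A^{\gamma^{*}}\orderIntervalsEstrict B^{\gamma^{*}}$ (in the displayed definition the index $j$ may, and informatively must, be taken equal to $\gamma^{*}$, where $\orderIntervals$ is then strict). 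Next I would record two facts about a single $A\in\setdfn$: (i) for every $\alpha\in(0,1]$ the cut $A^{\alpha}$ is a \emph{non-empty closed subinterval} of $\Ln$ --- it contains the non-empty core and, by the unimodal shape of $A$ together with $\operatorname{supp}(A)$ being a block of consecutive elements of $\Ln$, it is itself such a block --- so $A^{\alpha}\in\setIntervals{n}$ and $\orderIntervals$ legitimately applies; and (ii) the map $\alpha\mapsto A^{\alpha}$ is nested ($\alpha\le\alpha'\Rightarrow A^{\alpha}\supseteq A^{\alpha'}$), constant on each half-open interval between consecutive relevant levels, and recovers $A$ via $A(x)=\max\{\alpha\mid x\in A^{\alpha}\}$ (and $0$ off the support); in particular $A=B$ as fuzzy sets iff $A^{\alpha}=B^{\alpha}$ for all $\alpha\in(0,1]$.

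\textbf{Grid independence (the key lemma).} Because $\orderDfn$ is defined through the pair-dependent grid $S_{AB}$, the central step is to show that the truth of ``$A=B$'', ``$A\orderDfn B$'' and ``$A\orderDfnEq B$'' is unaffected if $S_{AB}$ is replaced by any finite $G$ with $S_{AB}\subseteq G\subseteq(0,1]$ and $1\in G$. The argument: both $\alpha\mapsto A^{\alpha}$ and $\alpha\mapsto B^{\alpha}$ are constant on each interval $(\gamma_{i-1},\gamma_{i}]$ delimited by $S_{AB}$ (with $\gamma_{0}:=0$), hence $\{\alpha\mid A^{\alpha}\neq B^{\alpha}\}$ is a union of such intervals; its supremum $\gamma^{*}$ therefore lies in $S_{AB}$, satisfies $A^{\gamma^{*}}\neq B^{\gamma^{*}}$, and no level of $G$ above $\gamma^{*}$ is a disagreement level. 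So the ``topmost disagreement level'' and the pair $A^{\gamma^{*}},B^{\gamma^{*}}$ it selects are the same whether one works over $S_{AB}$ or over $G$. Two consequences: the first clause of the definition is consistent with equality of fuzzy sets, and when comparing three dfns $A,B,C$ one may run all three pairwise comparisons over the common grid $S_{A}\cup S_{B}\cup S_{C}$.

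\textbf{The order axioms.} With the lemma available the rest is routine. \emph{Reflexivity}: $A=A$. \emph{Totality}: if $A\neq B$ and $\gamma^{*}$ is the topmost disagreement level on $S_{AB}$, then $A^{\gamma^{*}},B^{\gamma^{*}}$ are distinct intervals, so totality of $\orderIntervals$ gives, say, $A^{\gamma^{*}}\orderIntervalsEstrict B^{\gamma^{*}}$ while the cuts agree above $\gamma^{*}$; hence $A\orderDfn B$, and any two dfns are $\orderDfnEq$‑comparable. \emph{Antisymmetry}: if both $A\orderDfn B$ and $B\orderDfn A$ held, each would be witnessed at the same topmost disagreement level $\gamma^{*}$ and would give $A^{\gamma^{*}}\orderIntervalsEstrict B^{\gamma^{*}}$ and $B^{\gamma^{*}}\orderIntervalsEstrict A^{\gamma^{*}}$ --- impossible for the strict part of a total order. \emph{Transitivity}: given $A\orderDfn B$ and $B\orderDfn C$, work over $S_{A}\cup S_{B}\cup S_{C}=\{\delta_{1}<\cdots<\delta_{r}=1\}$; let $p,q$ be the topmost disagreement indices of $(A,B)$ and $(B,C)$ and $s=\max(p,q)$. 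Above $\delta_{s}$ all three families of cuts coincide; at $\delta_{s}$, if $p\neq q$ then one of the two relevant pairs already agrees there and the strict relation transfers, while if $p=q$ it transfers by transitivity of $\orderIntervals$; in every case $A^{\delta_{s}}\orderIntervalsEstrict C^{\delta_{s}}$, i.e. $A\orderDfn C$. Finally, \emph{admissibility}: one checks that $\orderDfnEq$ refines the natural componentwise partial order on $\setdfn$ (namely $A^{\gamma}\leq_{2}B^{\gamma}$ at every relevant level $\gamma$); indeed if $A$ lies below $B$ in that order and $A\neq B$, then at the topmost disagreement level $\gamma^{*}$ one has $A^{\gamma^{*}}\leq_{2}B^{\gamma^{*}}$ with $A^{\gamma^{*}}\neq B^{\gamma^{*}}$, and admissibility of $\orderIntervals$ forces $A^{\gamma^{*}}\orderIntervalsEstrict B^{\gamma^{*}}$, so $A\orderDfn B$.

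\textbf{Main obstacle.} The one genuinely delicate ingredient is the grid-independence lemma: the definition is phrased relative to the ad hoc union $S_{AB}$, so without first showing that this choice is immaterial neither the equality clause nor transitivity (which must reconcile three different grids) is well-posed. Once that lemma is in place, everything reduces to the standard fact that a reverse-lexicographic order over a totally --- respectively admissibly --- ordered alphabet of coordinate values is again a total --- respectively admissible --- order.
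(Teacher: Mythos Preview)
The paper does not supply a proof of this theorem: it is quoted from \cite{riera2021} as a preliminary result, so there is no in-paper argument to compare against. Your proposal is a self-contained and correct verification. The decisive observation you single out---that the pair-dependent grid $S_{AB}$ can be replaced by any finite superset without changing the outcome, because $\alpha\mapsto A^{\alpha}$ is constant between consecutive relevant levels---is exactly what is needed to make transitivity well-posed and to reconcile the equality clause with equality of fuzzy sets; once that is in place, the remaining axioms follow by the standard reverse-lexicographic argument over the totally ordered alphabet $(\setIntervals{n},\orderIntervals)$. Your remark that the index $j$ in the displayed clause must be read as the topmost disagreement level (so that $\orderIntervals$ is strict there) is also to the point: taken literally with a non-strict $\orderIntervals$ and an arbitrary $j$, the clause would allow both $A\orderDfn B$ and $B\orderDfn A$ whenever the cores coincide, so the intended reading is the one you adopt. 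The admissibility paragraph is likewise correct, with ``admissible'' on $\setdfn$ understood as refining the cut-wise partial order induced by $\leq_{2}$.
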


\begin{remark}
Even though the total order mentioned in Definition \ref{ordrefuzzynumbers} depends on an admissible order on \( \setIntervals{n} \), the binary relation \( \orderDfnEq \) continues to be a total order when any total (not necessarily admissible) order on \( \setIntervals{n} \) is considered.
\end{remark}

We will denote by \( \subsetdfn \) the subset of discrete fuzzy numbers \( A \) whose support is a closed subinterval of the finite chain \( L_n \) and whose membership values \( A(x) \) for \( x \in L_n \) belong to the set \( Y_m = \{ y_1 = 0, y_2, \ldots, y_m = 1 \} \), with \( 0 = y_1 < y_2 < \cdots < y_{m-1} < y_m = 1 \).

Setting the number of possible values that the membership function can take, \( Y_m = \{ y_1 = 0, y_2, \ldots, y_m = 1 \} \), immediately implies that the cardinality of the set $\subsetdfn$ is finite. Therefore, in \cite{Mir-Fuentes2024}, this value is determined as shown in the following result.
\begin{proposition}\label{numerodedfn}

The number of discrete fuzzy numbers in the set \( \subsetdfn \) is given by
\[
\left| \subsetdfn \right| = \binom{n + 2m - 2}{2m - 2} = \frac{(n + 2m - 2)!}{(2m - 2)! \, n!}.
\]
\end{proposition}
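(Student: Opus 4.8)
The plan is to set up an explicit bijection between $\subsetdfn$ and the family of weakly increasing sequences of length $2m-2$ with entries in $\Ln=\{0,1,\dots,n\}$, and then to count the latter by a stars-and-bars argument.

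First I would note that every $A\in\subsetdfn$ is completely determined by its family of $\alpha$-cuts at the positive membership levels $y_2<y_3<\cdots<y_m=1$, since $A(x)=\max\{\,y_j : x\in A^{y_j}\,\}$ with the convention that the maximum over the empty set equals $y_1=0$. Hence the assignment $\Phi(A)=(A^{y_2},A^{y_3},\dots,A^{y_m})$ is injective. Next I would characterize its image. Because $A\in\setdfn$, each cut $A^{y_j}=[a_j,b_j]$ is a closed subinterval of $\Ln$; the core $A^{y_m}=A^1$ is nonempty, so $a_m\le b_m$, and the monotonicity of the membership values (Definition~\ref{dfn}) forces the nesting $A^{y_2}\supseteq A^{y_3}\supseteq\cdots\supseteq A^{y_m}$, equivalently
\[
0\le a_2\le a_3\le\cdots\le a_m\le b_m\le b_{m-1}\le\cdots\le b_2\le n .
\]
Note that $A^{y_2}=\operatorname{supp}(A)$, since $y_2$ is the least positive element of $Y_m$.

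Conversely, given any chain of nested nonempty closed intervals $[a_2,b_2]\supseteq\cdots\supseteq[a_m,b_m]$ in $\Ln$, I would define $A$ on $\Ln$ by $A(x)=\max\{\,y_j : a_j\le x\le b_j\,\}$ (and $A(x)=0$ when no such $j$ exists), and check that $A\in\subsetdfn$: its support is exactly $[a_2,b_2]$, a closed subinterval of $\Ln$; its values lie in $\{y_1,\dots,y_m\}=Y_m$; it equals $1$ precisely on the nonempty core $[a_m,b_m]$; and since the index set $\{j : a_j\le x\le b_j\}$ only grows as $x$ moves toward $[a_m,b_m]$ from either side, $A$ is non-decreasing to the left of the core and non-increasing to its right, matching conditions (1)--(3) of Definition~\ref{dfn}. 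Therefore $\Phi$ is a bijection between $\subsetdfn$ and the set of chains described by the displayed inequality. Such a chain is nothing but a weakly increasing sequence $a_2,a_3,\dots,a_m,b_m,b_{m-1},\dots,b_2$ of length $(m-1)+(m-1)=2m-2$ with entries in the $(n+1)$-element chain $\Ln$.

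Finally I would invoke the standard count: the number of weakly increasing sequences of length $\ell=2m-2$ with entries in an $(n+1)$-element chain is $\binom{(n+1)+\ell-1}{\ell}=\binom{n+2m-2}{2m-2}$, which yields the asserted value of $\left|\subsetdfn\right|$. The main obstacle is the bookkeeping inside the bijection — verifying cleanly that the reconstructed $A$ satisfies all three unimodality conditions and that its support and core come out as the outermost and innermost intervals — but this is routine once the nesting inequalities above are in hand. One should also record that $m\ge 2$, since $y_1=0<y_m=1$ already requires at least two levels; for the base case $m=2$ the construction specializes to indicators of nonempty closed subintervals of $\Ln$, in agreement with $\binom{n+2}{2}$.
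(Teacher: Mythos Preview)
Your argument is correct. Note, however, that the paper does not actually prove this proposition: it is quoted from \cite{Mir-Fuentes2024} and stated without proof, so there is no in-paper argument to compare against. Your bijection between elements of $\subsetdfn$ and nested chains $[a_2,b_2]\supseteq\cdots\supseteq[a_m,b_m]$ of nonempty closed subintervals of $L_n$, encoded as the weakly increasing $(2m{-}2)$-tuple $a_2\le\cdots\le a_m\le b_m\le\cdots\le b_2$, is the natural approach, and the stars-and-bars count $\binom{(n+1)+(2m-2)-1}{2m-2}=\binom{n+2m-2}{2m-2}$ is correct. The same combinatorial device (counting monotone maps into a finite chain via $\binom{m_1+m_2-1}{m_2-1}$) appears later in the paper when computing $|\setGdfn(a_i,b_i,j)|$, so your method is entirely consistent with the tools the paper relies on.
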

In this way, if we consider a total (admissible) order constructed according to Theorem \ref{ordrefuzzynumbers}, \( \orderDfnEq \), we immediately obtain a bounded finite lattice $(\subsetdfn, \orderDfnEq, \mathbf{1}_{\min}, \mathbf{1}_{\max})$, which can be represented as (see \cite{Gonzalez2025})
\[
\mathbf{1}_{\min} \orderDfn A_2 \orderDfnEq \ldots \orderDfnEq \mathbf{1}_{\max}.
\]
This fact allows us to define the \textit{pos function} as follows:
\begin{definition}\cite{Gonzalez2025} \label{funcpos}Consider the finite chain $L_k=\{0,\ldots, k\}$ where $k=\left|\mathcal{A}_1^{L_n \times Y_m}\right|-1$. The function
$$
\begin{array}{rccl}
pos:&\subsetdfn &\longrightarrow&L_k\\
&A&\mapsto&pos(A)=\left|\{X\in \subsetdfn \mid\;X \orderDfnEq A\}\right|-1,
\end{array}
$$ 
where $\left|\{X\in \subsetdfn \mid\;X \orderDfnEq A\}\right|$ represents the number of elements in $\mathcal{A}_1^{L_n \times Y_m}$ that are less than or equal to 
$A$ (according to the total (admissible) order $ \orderDfnEq $), it is called the \textit{position function}. This function is called \textit{ranking dfn} because, given a dfn, it assigns it a rank within the finite chain $L_k$.
\end{definition}

The above function is an order isomorphism (see Proposition 3 in \cite{Gonzalez2025}) where 
its inverse function is :
\[
\begin{array}{rccl}
\mathit{pos}^{-1}: & L_k = \{0, \ldots, k\} & \longrightarrow & \subsetdfn \\
                   & i                       & \mapsto         & A \text{ such that } \mathit{pos}(A) = i,
\end{array}
\]
Consequently, given a discrete fuzzy number \( A\in \subsetdfn \) there exists a unique corresponding element in the finite chain $L_k$ and vice versa. In the same way, this function is called \textit{unranking dfn} because, given an index in $L_k$, it returns the dfn whose rank corresponds to that index.

This correspondence will be fundamental to the development of our research, as will be demonstrated in the following sections.

The following result proves that any discrete fuzzy number is completely determined from its membership values or, equivalently, from its relevant $\alpha$-cuts. The idea behind this representation using relevant alpha cuts is similar to Theorem 2.5 in \cite{KlirYuan1995}. We would like to highlight that next demonstration provides an algorithm that allows a discrete fuzzy number to be constructed from its relevant $\alpha$-cuts. This algorithm will be used in next sections.
\begin{lema} \label{descomposicio_dfn}
\label{EQALPHACUTS}
Let $A$ be a discrete fuzzy number in the set $\subsetdfn$, and let $S_A = \{y_{i_1} < \cdots < y_{i_k} = 1\}$ denote the set of relevant $\alpha$-levels of $A$.

The following statements are equivalent:
\begin{itemize}
    \item[(i)] The values $A(i)$ are known for all $i \in L_n$.
    \item[(ii)] The relevant $\alpha$-cuts $A^{y_{i_j}}$ are known for all $j = 1, \ldots, k$.
\end{itemize}

In other words, the membership function of $A$ is completely determined by its relevant $\alpha$-cuts, and vice versa.
\end{lema}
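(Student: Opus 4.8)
The plan is to prove the two implications separately. The forward direction (i)$\Rightarrow$(ii) is essentially a tautology, whereas the converse (ii)$\Rightarrow$(i) carries the real content and will furnish the reconstruction algorithm advertised just before the statement; this is the step I will spend effort on.

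For (i)$\Rightarrow$(ii): if the value $A(i)$ is known for every $i\in L_n$, then by the definition of the $\alpha$-cut recalled above, $A^{y_{i_j}}=\{\,i\in L_n\mid A(i)\ge y_{i_j}\,\}$ is determined for each $j=1,\dots,k$. Nothing else is needed here.

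For (ii)$\Rightarrow$(i): I would first record two structural facts about any $A\in\subsetdfn$ that follow from Definition~\ref{dfn}. Since the core of $A$ is nonempty, $\max_i A(i)=1$, and the unimodal monotonicity conditions~2 and~3 imply that for every $\alpha\in(0,1]$ the cut $A^\alpha$ is a nonempty closed subinterval of $L_n$, with the relevant cuts nested as $A^{y_{i_1}}\supseteq A^{y_{i_2}}\supseteq\cdots\supseteq A^{y_{i_k}}$ because $y_{i_1}<\cdots<y_{i_k}$. (This interval representation is also the form in which the cuts will be manipulated in the following sections.) The second, decisive fact is that $A$ attains every one of its positive values exactly on a relevant level: if $A(i)>0$ then $i\in\operatorname{supp}(A)$ and $A(i)$ is by definition some $y_{i_{j_0}}\in S_A$. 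Hence for a fixed $i\in L_n$ one has $i\in A^{y_{i_j}}$ if and only if $y_{i_j}\le A(i)$, so that
\[
A(i)=\max\{\,y_{i_j}\mid i\in A^{y_{i_j}},\ 1\le j\le k\,\},
\]
with the convention that the maximum over the empty set equals $y_1=0$ (which is precisely the case $i\notin\operatorname{supp}(A)$). Since the right-hand side involves only the data in (ii), the membership function is recovered; equivalently, scanning $j=k,k-1,\dots,1$ and assigning to each $i$ the largest $y_{i_j}$ whose cut still contains it reconstructs $A$, and this is the algorithm to be invoked later.

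The argument is short, so the only points requiring care are the verification that an $\alpha$-cut of a dfn is genuinely a closed interval (a direct but slightly tedious consequence of conditions~2 and~3 of Definition~\ref{dfn}) and the bookkeeping at the two extremes — the core, where $A(i)=1=y_m$, and the complement of the support, where $A(i)=0=y_1$ — so that restricting attention to the \emph{relevant} levels $y_{i_1},\dots,y_{i_k}$, rather than to all $\alpha\in(0,1]$, loses no information. I do not expect any genuine obstacle beyond this.
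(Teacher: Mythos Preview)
Your proposal is correct and follows essentially the same route as the paper. The paper dismisses (i)$\Rightarrow$(ii) as ``straightforward'' and proves (ii)$\Rightarrow$(i) by exhibiting an explicit top-down algorithm that assigns $A(j)=1$ on the core and then, descending through the relevant levels, assigns $A(j)=y_{i_j}$ on the set-difference $A^{y_{i_j}}\setminus A^{y_{i_{j+1}}}$; your formula $A(i)=\max\{\,y_{i_j}\mid i\in A^{y_{i_j}}\,\}$ is the closed-form expression of exactly this procedure, and you even note the equivalent scan-from-the-top description.
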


\begin{proof}
    The implication (i) $\Rightarrow$ (ii) is straightforward.

To prove implication (ii) $\Rightarrow$ (i), let $A^{y_{i_j}} = [l_j, r_j]$ with $l_j, r_j \in L_n$, for $j = 1, \ldots, k$, be the relevant $\alpha$-cuts of the dfn $A$.

The following algorithm reconstructs the membership values $A(i)$ for all $i \in L_n$:

\begin{algorithm}[H]
\caption{Computation of $A$ from its relevant $\alpha$-cuts}
\begin{algorithmic}[1]
\Require $n \in \mathbb{N}$, $m \ge 2$, and the relevant $\alpha$-cuts $A^{y_{i_j}} = [l_j, r_j]$
\For{$j = l_k$ to $r_k$}
  \State $A(j) \gets 1$ \Comment{Core computation}
\EndFor
\For{$i = k-1$ down to $1$}
    \For{$j = l_i$ to $l_{i+1} - 1$}
        \State $A(j) \gets y_{i_j}$
    \EndFor
    \For{$j = r_{i+1} + 1$ down to $r_i$}
        \State $A(j) \gets y_{i_j}$
    \EndFor
\EndFor
\end{algorithmic}
\end{algorithm}

Since $A = \bigcup_{j=1}^{k} A^{y_{i_j}}$, the above algorithm allows a discrete fuzzy number to be constructed from its relevant $\alpha$-cuts.
\end{proof}

\section{Motivation}
\label{sec:motivation}
In \cite{Gonzalez2025}, a method based on the \textit{pos function} (see Definition \ref{funcpos})  is introduced that allows implication functions to be obtained in the finite bounded lattice $(\subsetdfn, \orderDfnEq, \mathbf{1}_{\min}, \mathbf{1}_{\max})$  from discrete implication functions, as can be seen in the following result.

 \begin{theorem} Let $I$ be a discrete implication function on $L_k=\{0,\ldots, k\}$ where $k=\left|\subsetdfn\right|-1$. Then the following function:
    $$
\begin{array}{rccl}
{\mathcal I}:&\subsetdfn\times \subsetdfn &\longrightarrow&{\subsetdfn}\\
&(A,B)&\mapsto&{\mathcal I}(A,B)=pos^{-1}(I(pos(A),pos(B))),
\end{array}
$$
is an implication function on the lattice $\mathcal{L'}=(\subsetdfn, \orderDfnEq, \mathbf{1}_{\min}, \mathbf{1}_{\max})$.
\end{theorem}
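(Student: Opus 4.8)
The plan is to show that $\mathcal{I}$ inherits the defining properties of an implication function directly from $I$ by transporting them through the order isomorphism $pos$. Recall that a discrete implication function $I$ on $L_k$ satisfies two boundary-type monotonicity conditions and the corner conditions $I(0,0)=I(k,k)=k$ and $I(k,0)=0$; since $pos$ is an order isomorphism between $(\subsetdfn,\orderDfnEq)$ and $(L_k,\leq)$ (Proposition 3 in \cite{Gonzalez2025}), it is a bijection with $pos(\mathbf{1}_{\min})=0$ and $pos(\mathbf{1}_{\max})=k$, and both $pos$ and $pos^{-1}$ are order-preserving. The strategy is therefore: first record these facts about $pos$; then verify the boundary conditions of $\mathcal{I}$; and finally verify the two monotonicity properties.

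First I would establish the boundary conditions. Using $pos(\mathbf{1}_{\min})=0$ and $pos(\mathbf{1}_{\max})=k$ together with the definition $\mathcal{I}(A,B)=pos^{-1}(I(pos(A),pos(B)))$, we get $\mathcal{I}(\mathbf{1}_{\min},\mathbf{1}_{\min})=pos^{-1}(I(0,0))=pos^{-1}(k)=\mathbf{1}_{\max}$, and similarly $\mathcal{I}(\mathbf{1}_{\max},\mathbf{1}_{\max})=pos^{-1}(I(k,k))=pos^{-1}(k)=\mathbf{1}_{\max}$ and $\mathcal{I}(\mathbf{1}_{\max},\mathbf{1}_{\min})=pos^{-1}(I(k,0))=pos^{-1}(0)=\mathbf{1}_{\min}$. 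This is a routine substitution once the endpoint values of $pos$ are known.

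Next I would handle monotonicity. Fix $A$ and take $B_1 \orderDfnEq B_2$; since $pos$ is order-preserving, $pos(B_1)\le pos(B_2)$, so by the monotonicity of $I$ in its second argument (for implications this is increasingness) $I(pos(A),pos(B_1))\le I(pos(A),pos(B_2))$; applying $pos^{-1}$, which is also order-preserving, yields $\mathcal{I}(A,B_1)\orderDfnEq \mathcal{I}(A,B_2)$. The argument for the first variable is symmetric, using that $I$ is decreasing in its first argument: $A_1 \orderDfnEq A_2$ gives $pos(A_1)\le pos(A_2)$, hence $I(pos(A_1),pos(B))\ge I(pos(A_2),pos(B))$, and applying $pos^{-1}$ gives $\mathcal{I}(A_2,B)\orderDfnEq \mathcal{I}(A_1,B)$.

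Honestly there is no serious obstacle here: the entire proof is a transport-of-structure argument, and the only thing one must be careful about is citing the correct facts — namely that $pos$ is an \emph{order isomorphism} (not merely a bijection), so that both it and its inverse preserve order, and that $pos$ sends the bottom and top of the lattice to $0$ and $k$ respectively. If one wanted to be maximally self-contained, the mild bookkeeping point worth making explicit is that an order isomorphism between finite chains automatically maps minimum to minimum and maximum to maximum, which justifies $pos(\mathbf{1}_{\min})=0$ and $pos(\mathbf{1}_{\max})=k$; everything else is immediate from composing monotone maps.
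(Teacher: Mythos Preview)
Your proof is correct and is precisely the standard transport-of-structure argument one expects here. Note, however, that the present paper does not actually prove this theorem: it is stated in the Motivation section as a result recalled from \cite{Gonzalez2025}, so there is no ``paper's own proof'' to compare against. Your argument is essentially the one given in that reference, and the only point worth tightening is exactly the one you flag yourself---making explicit that an order isomorphism between finite bounded chains sends bottom to bottom and top to top, which secures $pos(\mathbf{1}_{\min})=0$ and $pos(\mathbf{1}_{\max})=k$.
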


Although the method is appropriate when the cardinality of the set $\left|\subsetdfn\right| $ is \textit{small}, the computation of the functions $\mathit{pos}$ and $\mathit{pos}^{-1}$ becomes computationally expensive as $m$ increases.

Indeed, if a brute-force approach is used in combination with the quicksort algorithm (see \cite[Chapter~8]{cormen2009introduction}), the time complexity is $O(N \log N) \approx O(m^{n+1} \log m)$.

Moreover, this approach requires storing all $N$ dfns in memory, which may be impractical for large values of $m$.

In this work, we propose a method that, on the one hand, avoids computing and storing all $N$ dfns, and on the other hand, computes the functions $\mathit{pos}$ and $\mathit{pos}^{-1}$ with a time complexity of order $O(n^2 m\log(n))$. In particular, for fixed $n$, we reduce the dependence on $m$ from $O(m^{n+1} \log m)$ (brute force with sorting) to $ O(n^2 m \log n)$.

\section{Computing the Functions $pos^{-1}$ and $pos$}
\label{sec:cost_inv_pos}

This section presents two algorithms that allow the $\mathit{pos}^{-1}$ and $\mathit{pos}$ functions to be calculated. In addition, its computational cost will be calculated, showing that this method allows us to efficiently obtain the discrete fuzzy number that occupies the i-th position in the finite bounded lattice $(\subsetdfn, \orderDfnEq, \mathbf{1}_{\min}, \mathbf{1}_{\max})$. 

\begin{proposition}\label{numintervals} Consider the finite chain $L_n=\{0,\ldots,n\}$ and let $\orderIntervals$ be a total order in the set $\setIntervals{n}$ of closed intervals defined on $L_n$. Then $(\setIntervals{n}, \orderIntervals)$ is a finite lattice where \[
\left| \setIntervals{n} \right| = \binom{n + 2}{2} = \frac{(n + 1)(n+2)}{2}.
\]
 In this way, if $I_i$ denotes the interval that occupies position $i$ in this finite lattice, we obtain the following chain of intervals:   \[
I_1 \orderIntervals I_2 \orderIntervals \ldots \orderIntervals I_{\frac{(n+1)(n+2)}{2}}
\] Note that this chain depends on the chosen total order.
\end{proposition}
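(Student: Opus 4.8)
The plan is to prove Proposition~\ref{numintervals} in two parts: first the cardinality count, then the lattice structure and the induced chain.

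\textbf{Counting $|\setIntervals{n}|$.} The set $\setIntervals{n}$ consists of all pairs $[a,b]$ with $a,b\in L_n=\{0,1,\ldots,n\}$ and $a\le b$. I would count these directly: for each fixed left endpoint $a\in\{0,\ldots,n\}$, the right endpoint $b$ ranges over $\{a,a+1,\ldots,n\}$, giving $n+1-a$ choices. Summing, $\sum_{a=0}^{n}(n+1-a)=\sum_{j=1}^{n+1}j=\frac{(n+1)(n+2)}{2}$. Equivalently, choosing an unordered pair with repetition allowed from $n+1$ symbols is a multiset coefficient $\binom{(n+1)+2-1}{2}=\binom{n+2}{2}$, which matches. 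This is the routine part and needs only a line or two.

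\textbf{Lattice structure.} Here I must be slightly careful about what ``lattice'' means relative to which order. If $\orderIntervals$ is any total order on a finite set, then $(\setIntervals{n},\orderIntervals)$ is trivially a finite totally ordered set, and every finite chain is a lattice (joins and meets are just maxima and minima of two elements, which always exist in a finite chain). So the cleanest route is: since $\setIntervals{n}$ is finite and $\orderIntervals$ is total, $(\setIntervals{n},\orderIntervals)$ is a finite chain, hence a bounded lattice with $\min$ and $\max$ given by the $\orderIntervals$-least and $\orderIntervals$-greatest intervals; I would cite the elementary fact (as in \cite{Gra2011}) that any finite totally ordered set is a complete lattice. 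One could alternatively note that with the partial order $\leq_2$ the set $\setIntervals{n}$ is also a lattice under componentwise min/max of endpoints, but since the proposition is phrased with respect to the total order $\orderIntervals$, the chain argument is the intended one.

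\textbf{The induced chain.} Finally, because $\orderIntervals$ is a total order on a finite set of size $N:=\frac{(n+1)(n+2)}{2}$, its elements can be listed in strictly increasing order; denoting by $I_i$ the interval in position $i$ (for $i=1,\ldots,N$) yields $I_1\orderIntervals I_2\orderIntervals\cdots\orderIntervals I_N$, and this enumeration is unique once $\orderIntervals$ is fixed but changes if $\orderIntervals$ does (e.g.\ $\leq_{lex1}$ versus $\leq_{XY}$ give different orderings), which is exactly the closing remark. I expect no real obstacle here; the only point deserving a word of care is making explicit that ``lattice'' is meant with respect to $\orderIntervals$ so that the finiteness-plus-totality argument applies, rather than leaving the reader to wonder whether a nontrivial lattice-theoretic claim about $\leq_2$ is being asserted.
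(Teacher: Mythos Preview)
Your proposal is correct and follows the same approach as the paper: the paper's proof simply observes that each interval $[a,b]\in\setIntervals{n}$ is determined by a pair $a,b\in L_n$ with $a\le b$, which is exactly your counting argument, and it omits the lattice and chain parts entirely as trivial. Your additional explicit verification that a finite totally ordered set is a lattice is more thorough than what the paper provides, but not a different route.
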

\begin{proof}
   The reasoning is straightforward, since each interval $[a,b]\in  \setIntervals{n}$ is determined by two elements $a,b\in L_n$ satisfying $a\leq b$. 
\end{proof}

As will be shown, the algorithm to be proposed relies on the use of generalized discrete fuzzy numbers and some of their key properties. For this purpose, we first recall the concept of generalized discrete fuzzy number.
\begin{definition}\label{gdfn}
A fuzzy subset $A$ of $ \mathbb{R}$ with membership mapping $A:
\mathbb{R} \to [0,1]$ is called a \emph{generalized discrete fuzzy number}, or \emph{gdfn} for short, if
its support is finite, i.e., there exist $x_{1},...,x_{n}\in
\mathbb{R}$ with $x_{1}<x_{2}<...<x_{n}$ such that
$supp(A)=\{x_1,...,x_{n}\}$, and there are natural numbers $s, t$
with $1\leq s\leq t\leq n $ such that:
\begin{enumerate}
\item $A(x_{i})=k$ for all $i$ with $s\leq i\leq t$ with $k\leq 1$.
(\emph{k-core}) 
\item $A(x_{i})\leq A(x_{j})$ for all $i, j$ with  $1\leq i\leq j\leq s $. \item $A(x_{i})\geq
A(x_{j})$ for all $i, j$ with $t\leq i\leq j\leq n
$.
\end{enumerate}
A gdfn \(A\)  with support $supp(A)=\{x_1,\ldots,x_n\}$ is denoted as \( A = \{ A(1)/x_1, \ldots, A(n)/x_n \} \).

\end{definition}
In Proposition \ref{numintervals}, the total number of closed intervals that can be considered in the chain $L_n$ is calculated.  Based on this idea, we have the following definition:
\begin{definition}
Let us consider the set $\subsetdfn$, where $L_n = \{0, \ldots, n\}$ denotes the finite chain, and 
$Y_m = \{y_1 = 0 < y_2 < \ldots < y_m = 1\}$ represents the corresponding set of membership values.

For each $i \in \left\{1, \ldots, \frac{(n+1)(n+2)}{2} = \left| \setIntervals{n} \right| \right\}$, 
let $I_i = [a_i, b_i]$ denote the closed interval that occupies the $i$-th position in 
$(\setIntervals{n}, \orderIntervals)$.  
For each $j \in \{1, \ldots, m\}$, we define:
\begin{align*}
\setGdfn(a_i,b_i,j) 
&= \Biggl\{\, A \text{ is a gdfn with membership values in } 
Y_j = \{y_1 = 0 < \ldots < y_j\} \\
&\qquad\quad \Big|\, \{x \in L_n \mid A(x) = y_j\} = I_i = [a_i, b_i] \,\Biggr\}.
\end{align*}

In other words, $\setGdfn(a_i,b_i,j)$ denotes the set of gdfns whose membership values belong to 
$Y_j = \{y_1 = 0 < \ldots < y_j\}$, and whose maximum membership value corresponds to the interval 
$I_i = [a_i, b_i]$, which occupies position $i$ in the lattice $(\setIntervals{n}, \orderIntervals)$.
\end{definition}
\begin{proposition}
    The cardinality of the set $\setGdfn(i,j)$ is given by 
$$
|\setGdfn(a_i,b_i,j)| = \binom{a_i + j - 2}{j-2} \cdot \binom{n - b_i + j - 2}{j-2}.
$$
\end{proposition}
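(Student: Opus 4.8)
The plan is to decompose a gdfn $A \in \setGdfn(a_i, b_i, j)$ into two independent pieces — the "left tail" on $\{0, 1, \ldots, a_i - 1\}$ and the "right tail" on $\{b_i + 1, \ldots, n\}$ — and count each piece by a stars-and-bars argument, so that the cardinality factors as a product. First I would observe that, by the definition of $\setGdfn(a_i, b_i, j)$, every $A$ in this set satisfies $A(x) = y_j$ exactly on the interval $I_i = [a_i, b_i]$; hence $A(x) < y_j$ for $x < a_i$ and for $x > b_i$, and by the monotonicity conditions in Definition~\ref{gdfn} the restriction of $A$ to $\{0, \ldots, a_i - 1\}$ is nondecreasing with values in $Y_{j-1} = \{y_1 = 0 < \cdots < y_{j-1}\}$, while the restriction to $\{b_i + 1, \ldots, n\}$ is nonincreasing with values in the same set. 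Moreover, the left tail and the right tail can be chosen completely independently of one another, so $|\setGdfn(a_i, b_i, j)|$ is the product of the number of admissible left tails and the number of admissible right tails.

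Next I would count the left tails. A nondecreasing function from the $a_i$-element chain $\{0, \ldots, a_i - 1\}$ into the $(j-1)$-element chain $\{y_1, \ldots, y_{j-1}\}$ is the same as a multiset of size $a_i$ drawn from $j-1$ symbols, of which there are $\binom{a_i + (j-1) - 1}{(j-1) - 1} = \binom{a_i + j - 2}{j - 2}$. Symmetrically, a nonincreasing function from the $(n - b_i)$-element chain $\{b_i + 1, \ldots, n\}$ into the same $(j-1)$-element chain is counted by $\binom{(n - b_i) + j - 2}{j - 2}$. Multiplying the two factors gives exactly the claimed formula $|\setGdfn(a_i, b_i, j)| = \binom{a_i + j - 2}{j - 2} \cdot \binom{n - b_i + j - 2}{j - 2}$. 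I would also note the boundary/degenerate cases: when $j = 1$ there is no room for any nontrivial tail (the only gdfn has $A \equiv y_1$ on the complement, forcing $a_i = 0$ and $b_i = n$), and the binomial coefficients $\binom{a_i - 1}{-1}$, $\binom{n - b_i - 1}{-1}$ should be read as $0$ unless $a_i = 0$ (resp. $b_i = n$), in which case the corresponding empty chain contributes a factor of $1$ — this is consistent with the convention $\binom{-1}{-1} = 1$ if one adopts it, so a short remark disambiguating the $j = 1$ case is warranted.

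The main obstacle is not the combinatorics itself — stars and bars is routine — but making the reduction to "two independent monotone tails" airtight. Specifically, one must check that the three clauses of Definition~\ref{gdfn} (the $k$-core clause with $k = y_j$, and the two monotonicity clauses around the core indices $s$ and $t$) are equivalent, for a fuzzy subset with support a subinterval of $L_n$ and membership values in $Y_j$ whose $y_j$-level set is $[a_i, b_i]$, to the pair of conditions "left restriction nondecreasing into $Y_{j-1}$" and "right restriction nonincreasing into $Y_{j-1}$" — with no hidden coupling such as a constraint forcing the endpoint values of the two tails to agree. Since the support of $A$ is required to be a closed subinterval of $L_n$, I would be slightly careful about whether values $y_1 = 0$ are allowed on the tails (they are, because a gdfn is here viewed as a fuzzy subset of $\mathbb{R}$ with finite support, and the zeros simply sit outside $\operatorname{supp}(A)$); this point should be stated explicitly so the count of $Y_{j-1}$-valued tails — rather than $Y_{j-1}\setminus\{0\}$-valued tails — is justified. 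Once that equivalence is spelled out, the product formula follows immediately.
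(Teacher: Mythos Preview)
Your proposal is correct and follows essentially the same approach as the paper: decompose each gdfn into an increasing left tail on $\{0,\ldots,a_i-1\}$ and a decreasing right tail on $\{b_i+1,\ldots,n\}$, both with values in $Y_{j-1}$, and count each independently as monotone maps between finite chains. The only cosmetic difference is that the paper invokes an external lemma for the count of monotone maps whereas you derive it directly via stars-and-bars, and you are more explicit about the $j=1$ boundary case and the support subtlety than the paper's proof.
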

\begin{proof}
Determining the cardinality of $\setGdfn(a_i, b_i, j)$ is equivalent to computing, on the one hand, 
the number of increasing maps from the set $\{0, 1, \ldots, a_i - 1\}$ to 
$\{0 = y_1 < y_2 < \cdots < y_{j-1}\}$, and, on the other hand, 
the number of decreasing maps from $\{b_i + 1, \ldots, n\}$ to the same codomain.  

According to Lemma~3.2 in~\cite{MUNAR2022}, the number of monotonic maps from 
$\{1, \ldots, m_1\}$ to $\{1, \ldots, m_2\}$ is given, in general, by
\[
\binom{m_1 + m_2 - 1}{m_2 - 1}.
\]
Hence, the number of increasing maps from $\{0, 1, \ldots, a_i - 1\}$ to 
$\{y_1, \ldots, y_{j-1}\}$ is 
\[
\binom{a_i + j - 2}{j - 2},
\]
and the number of decreasing maps from $\{b_i + 1, \ldots, n\}$ to the same codomain is 
\[
\binom{n - b_i + j - 2}{j - 2}.
\]
Since these choices are independent, the total number of gdfns in $\setGdfn(a_i, b_i, j)$ is the product of both quantities, that is,
\[
\left|\setGdfn(a_i, b_i, j)\right| = 
\binom{a_i + j - 2}{j - 2} 
\binom{n - b_i + j - 2}{j - 2}.
\]
This expression provides the desired result.
\end{proof}

From the previous discussion, we obtain the following algorithm for computing the dfn $A$ in the lattice $(\subsetdfn, \orderDfnEq)$ given an index $i$ with  $0\leq i \leq \left|\subsetdfn\right|-1$:

\begin{algorithm}[H]
\caption{Computation of dfn $A$ given an index $i$}
\begin{algorithmic}[1]
\Require $n \in \mathbb{N}$, $m \ge 2$, and an index $i$
\State $A^{y_{m+1}} \gets \emptyset$
\State $k_{m+1} \gets \frac{(n+1)(n+2)}{2}$
\For{$j = m$ down to $1$}
  \State $C_j \gets \{I_l \in \setIntervals{n} \mid A^{y_{j+1}} \subseteq I_l,\ l \leq k_{j+1}\} = \{I_1^{(j)}\orderIntervalsEstrict\ldots\orderIntervalsEstrict I_{s_j}^{(j)}\}$
  \State $k_j \gets \min\{l \mid I_l^{(j)} = [a_l^{(j)},b_l^{(j)}] \in C_j,\ \sum\limits_{k=1}^l |\setGdfn(a_k^{(j)},b_k^{(j)},j)| \geq i\}$
  \State $A^{y_j} \gets I_{k_j}$
  \State $i \gets i - \sum\limits_{k=1}^{k_j-1}|\setGdfn(a_{k}^{(j)}, b_{k}^{(j)}, j)|$, where $I_{k}^{(j)} = [a_{k}^{(j)}, b_{k}^{(j)}]$
\EndFor
\end{algorithmic}
\label{unrank-by-cuts}
\end{algorithm}
Before presenting the complexity analysis, we provide a brief explanation of the algorithm used to compute the dfn $A$ within the lattice $(\subsetdfn, \orderDfnEq)$. 
The algorithm proceeds as follows.

We start with the highest membership value of the dfn $A$, namely $y_m = 1$. 
At this level, we compute the set $C_m$, which contains all intervals from $\setIntervals{n}$ whose index is less than or equal to 
$k_{m+1} = \frac{(n+1)(n+2)}{2}$, i.e., the total number of intervals in $\setIntervals{n}$.

Next, we determine the index of the smallest interval, according to the considered interval order, such that the sum of the cardinalities of the sets of gdfns corresponding to all smaller intervals exceeds the index $i$. 
This interval becomes the core of the target dfn.

We then proceed downward through the levels, considering $y_{m-1}, y_{m-2}, \ldots, y_1$. 
At each level $j$, we repeat the same process: compute the corresponding interval, update the value of $i$, and determine the $\alpha$-cut $A^{y_j}$ associated with that level.

According to Lemma~\ref{EQALPHACUTS}, computing all the $\alpha$-cuts of a dfn $A$ is equivalent to reconstructing the entire dfn.
\begin{proposition}\label{costcomput}
The computational cost of the Algorithm 2 is  $O(n^2 m \log(n))$.  
\end{proposition}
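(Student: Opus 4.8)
The plan is to analyze Algorithm~\ref{unrank-by-cuts} line by line, bound the cost of each operation inside the two nested loops, and then multiply by the number of iterations. The outer loop runs exactly $m$ times (once for each membership level $j = m, m-1, \ldots, 1$), so the target bound $O(n^2 m \log n)$ will follow if I can show that each iteration costs $O(n^2 \log n)$.

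First I would fix an iteration with index $j$ and account for the construction of $C_j$ in line~5. Since $|\setIntervals{n}| = \binom{n+2}{2} = \Theta(n^2)$ by Proposition~\ref{numintervals}, filtering the intervals that contain the previous $\alpha$-cut $A^{y_{j+1}}$ and have index at most $k_{j+1}$ requires scanning at most $\Theta(n^2)$ intervals, each tested for the containment relation $A^{y_{j+1}} \subseteq I_l$ in constant time; the intervals are already in sorted order with respect to $\orderIntervals$ (or can be kept so), so $C_j$ is produced sorted at cost $O(n^2)$. Next, line~6 computes $k_j$ as the smallest index $l$ for which the prefix sum $\sum_{k=1}^{l} |\setGdfn(a_k^{(j)}, b_k^{(j)}, j)|$ reaches $i$. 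Each cardinality $|\setGdfn(a_k^{(j)}, b_k^{(j)}, j)| = \binom{a_k^{(j)}+j-2}{j-2}\binom{n-b_k^{(j)}+j-2}{j-2}$ is a product of two binomial coefficients whose arguments are bounded by $n+j \le n+m$; here I would invoke the standard fact that a binomial coefficient $\binom{N}{K}$ with $N = O(n+m)$ can be evaluated in $O(\log N) = O(\log n)$ arithmetic operations (e.g.\ via the multiplicative formula with the smaller of $K$ and $N-K$, or by fast exponentiation-style incremental updates), so that each term costs $O(\log n)$. Since $|C_j| = O(n^2)$, accumulating the prefix sums and locating $k_j$ costs $O(n^2 \log n)$ per iteration. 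Lines~7 and~8 (updating $A^{y_j}$ and decrementing $i$ by the already-computed prefix sum) are $O(1)$ and $O(n^2)$ respectively and are dominated. Multiplying the per-iteration cost $O(n^2 \log n)$ by the $m$ iterations of the outer loop gives the claimed total $O(n^2 m \log n)$.

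To complete the argument I would also note that once all $\alpha$-cuts $A^{y_1}, \ldots, A^{y_m}$ have been obtained, reconstructing the membership values of $A$ via the algorithm in Lemma~\ref{EQALPHACUTS} costs only $O(n)$ (two passes over $L_n$ at each of the $\le m$ levels, $O(nm)$ in the worst case), which is again dominated by $O(n^2 m \log n)$; I would remark that storing only the current and previous $\alpha$-cuts plus the running index $i$ suffices, so the auxiliary memory is constant (beyond the $O(n^2)$ list of intervals, which may be regarded as precomputed, or recomputed on the fly level by level without extra asymptotic cost).

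The main obstacle I anticipate is the bookkeeping around the binomial-coefficient evaluations: one must be careful that the arguments never exceed $O(n+m)$ so that $\log$ of the argument is $O(\log n)$ under the implicit regime $m = O(\operatorname{poly}(n))$ (or, more honestly, state the bound as $O(n^2 m \log(n+m))$ and then absorb $\log(n+m)$ into $\log n$ under the paper's standing assumption that $m$ is small, typically $m \le 13$), and that intermediate products in the multiplicative formula do not blow up the bit-complexity in a way that changes the count of arithmetic operations. A secondary subtlety is justifying that the sorted interval list and the prefix sums can be maintained across the $m$ iterations without recomputation blowing the budget; here the key observation is that $C_{j}$ is obtained from $\setIntervals{n}$ by a single $O(n^2)$ filtering step each time, so even recomputing it from scratch at every level stays within $O(n^2 m)$ overall and does not affect the dominant $\log n$ factor coming from the binomials.
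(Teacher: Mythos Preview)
Your outer decomposition agrees with the paper: the loop runs $m$ times and each level processes at most $|\setIntervals{n}|=O(n^{2})$ candidate intervals, so the target is $O(n^{2}\log n)$ work per level. Where you diverge is in the \emph{source} of the $\log n$ factor. The paper's proof obtains it from sorting: at each level the $O(n^{2})$ intervals are ordered via quicksort, costing $O(n^{2}\log(n^{2}))=O(n^{2}\log n)$, while the evaluation of each $|\setGdfn(a,b,j)|$ is treated as a unit-cost arithmetic operation. You instead assume the interval list is already sorted (so filtering is $O(n^{2})$) and charge the $\log n$ to the binomial evaluations.

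That attribution is the weak point. The claim that a single binomial $\binom{N}{K}$ with $N=O(n+m)$ can be computed in $O(\log N)$ arithmetic operations is not a standard fact: the multiplicative formula needs $\Theta(\min(K,N-K))$ multiplications, and here $\min(K,N-K)$ can be as large as $\min(n,m-2)$, not $\log n$; ``fast-exponentiation-style'' updates do not apply to a product of distinct factors. If you fall back to the paper's convention and treat each binomial as $O(1)$, your own accounting collapses to $O(n^{2})$ per level and $O(n^{2}m)$ overall, which no longer matches the stated bound. So either (i) supply a genuine $O(\log n)$ mechanism for the binomials (e.g.\ precomputed factorial tables, which you would then need to cost and store), or (ii) follow the paper and let the quicksort on the $O(n^{2})$ intervals carry the $\log n$, keeping the $|\setGdfn|$ evaluations at unit cost.
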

\begin{proof}
    
Since $|\setIntervals{n}| = \frac{(n+1)(n+2)}{2} = O(n^2)$, the computation of each set $C_j$ and the corresponding index $k_j$ requires sorting or searching among $O(n^2)$ elements.  
If the quicksort algorithm is employed, this operation has a cost of 
$O(n^2 \log(n^2)) = O(n^2 \log n)$ per iteration.

Repeating this process at most $m$ times—once for each membership level $y_j$—the total computational cost of the algorithm is therefore at most:
\[
O(n^2 m \log n).
\]

\end{proof}

\label{sec:cost_pos}

A similar argumentation leads to the following algorithm for the $pos$ function. 
In this case, given a discrete fuzzy number $A \in \subsetdfn$, 
we aim to compute the index $i \in \{0, \ldots, N-1\}$ such that $A$ occupies position $i$ in the finite lattice $(\subsetdfn, \orderDfnEq)$.

To do this, we assume that the $\alpha$-cuts $A^{y_j}$ for $j = 1, \ldots, m$ are known.

The algorithm that computes the index $i$ in the lattice $(\subsetdfn, \orderDfnEq)$ is as follows:

\begin{algorithm}[H]
\caption{Computation of index $i$ given a dfn $A$}
\label{alg:cuts-rank}
\begin{algorithmic}[1]
\Require $n \in \mathbb{N}$, $m \ge 2$, and a dfn $A$
\State $i \gets 1$
\State $A^{y_{m+1}}=\emptyset$
\For{$j = m$ down to $1$}
    \State $C_j \gets \{I_l \in \setIntervals{n} \mid A^{y_{j+1}} \subseteq I_l,\ I_l\orderIntervals A^{y_j}\} = \{I_1^{(j)}\orderIntervalsEstrict\ldots\orderIntervalsEstrict I_{s_j}^{(j)}\}$
  \State $i \gets i + \sum\limits_{k=1}^{s_j-1} |\setGdfn(a_k^{(j)},b_k^{(j)},j)|$
\EndFor
\end{algorithmic}
\label{rank-i-dfn}
\end{algorithm}
Finally, the last result of this section gives the computational cost of previous algorithm. 
\begin{proposition}
The computational cost of the Algorithm 3 is  $O(n^2 m \log(n))$.  
\end{proposition}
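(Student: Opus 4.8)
The plan is to mirror the proof of Proposition~\ref{costcomput} almost verbatim, since Algorithm~3 has the same loop structure as Algorithm~2. First I would observe that Algorithm~3 consists of a single \textbf{for} loop over the $m$ membership levels $j = m, m-1, \ldots, 1$, so it suffices to bound the cost of one iteration and multiply by $m$. Inside iteration $j$, the two operations performed are: (a) constructing the set $C_j = \{ I_l \in \setIntervals{n} \mid A^{y_{j+1}} \subseteq I_l,\ I_l \orderIntervals A^{y_j} \}$ together with its enumeration $I_1^{(j)} \orderIntervalsEstrict \cdots \orderIntervalsEstrict I_{s_j}^{(j)}$ in the order induced by $\orderIntervals$; and (b) computing the partial sum $\sum_{k=1}^{s_j - 1} |\setGdfn(a_k^{(j)}, b_k^{(j)}, j)|$ and adding it to the running index $i$.

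For step (a), I would use Proposition~\ref{numintervals}: the ambient set $\setIntervals{n}$ has $|\setIntervals{n}| = \frac{(n+1)(n+2)}{2} = O(n^2)$ elements. Filtering out those intervals $I_l$ that contain $A^{y_{j+1}}$ and satisfy $I_l \orderIntervals A^{y_j}$ is a linear scan, $O(n^2)$ comparisons, and sorting the survivors according to the total order $\orderIntervals$ costs $O(n^2 \log(n^2)) = O(n^2 \log n)$ with quicksort (invoking \cite[Chapter~8]{cormen2009introduction}). For step (b), the sum ranges over at most $s_j \le |\setIntervals{n}| = O(n^2)$ terms, and each term $|\setGdfn(a_k^{(j)}, b_k^{(j)}, j)|$ is a product of two binomial coefficients that can be evaluated in time negligible compared to $O(\log n)$ (or at any rate within the stated budget), so the whole sum costs $O(n^2)$, which is dominated by the $O(n^2 \log n)$ of step (a). Hence one iteration costs $O(n^2 \log n)$, and the $m$ iterations give the claimed bound $O(n^2 m \log n)$.

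The only genuine point requiring care — and the step I expect to be the mild obstacle — is justifying that each iteration really does stay within $O(n^2 \log n)$ rather than hiding an extra factor. Two places deserve a sentence: first, that the $\alpha$-cut $A^{y_{j+1}}$ used in the membership test is already available (it is either $\emptyset$ initially or was the interval $A^{y_{j+1}}$ handled in the previous iteration, so no recomputation is needed), and second, that evaluating the binomial coefficients in $|\setGdfn(\cdot)|$ does not blow up the cost — one can either argue these are computed once with a precomputed Pascal-triangle table of size $O((n+m)^2)$, amortized over the run, or simply note that each binomial with parameters bounded by $n + m$ costs $O(n + m)$ arithmetic operations, which is absorbed if one is slightly generous, or more cleanly that the $\log n$ factor from quicksort already dominates the per-element work. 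I would state the argument at the same level of rigor as Proposition~\ref{costcomput}, concluding that the dominant cost is the $O(n^2 \log n)$ sorting step repeated $m$ times, giving total cost $O(n^2 m \log n)$.
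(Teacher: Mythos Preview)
Your proposal is correct and follows exactly the approach the paper takes: the paper's own proof is a single sentence stating that the argument of Proposition~\ref{costcomput} carries over verbatim, and your write-up simply spells out that argument in detail (the $m$-fold loop, the $O(n^2)$ intervals, the $O(n^2\log n)$ sort per level). If anything, your version is more careful than the paper's, since you also comment on the availability of $A^{y_{j+1}}$ and the cost of evaluating the binomial terms, neither of which the paper addresses.
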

\begin{proof}
Following the same argument as in the Proposition \ref{costcomput}, it is straightforward to show that the computational cost of this algorithm is $O(n^2 m \log(n))$.
\end{proof}
The following examples provide a detailed illustration of how the algorithms presented in this section operate.

\begin{example}\label{ex:academic_example}

Let us consider the lattice $(\mathcal{D}_1^{L_5\rightarrow Y_6}, \Delta^{\downarrow}_{\delta})$, where  $L_5=\{0,1,2,3,4,5\}$, $Y_6=\{y_1=0<y_2=0.2<y_3=0.4<y_4=0.6<y_5=0.8<y_6=1\}$ and $\Delta^{\downarrow}_{\delta}=t\text{-inc}$ is the intervalar order (see Definition \ref{deforder} and corresponding examples). 

According to Proposition \ref{numintervals}, we have $\left| \setIntervals{5} \right| = \binom{5 + 2}{2} = \frac{(5 + 1)(5+2)}{2}= 21$ and using the $t\text{-inc}$ order the closed intervals of this lattice $(\left| \setIntervals{5} \right|,t\text{-inc})$ are sorted as follows:
\begin{align*}
[0,5] &\orderIntervalsEstrict [0,4] \orderIntervalsEstrict [0,3] \orderIntervalsEstrict [0,2] \orderIntervalsEstrict [0,1] \orderIntervalsEstrict [0,0]\\
&\orderIntervalsEstrict [1,5] \orderIntervalsEstrict [1,4] \orderIntervalsEstrict [1,3] \orderIntervalsEstrict [1,2] \orderIntervalsEstrict [1,1]\\
&\orderIntervalsEstrict [2,5] \orderIntervalsEstrict [2,4] \orderIntervalsEstrict [2,3] \orderIntervalsEstrict [2,2]\\
&\orderIntervalsEstrict [3,5] \orderIntervalsEstrict [3,4] \orderIntervalsEstrict [3,3] \orderIntervalsEstrict [4,5] \orderIntervalsEstrict [4,4] \orderIntervalsEstrict [5,5].
\end{align*}

Moreover, taking into account Proposition \ref{numerodedfn} we have the cardinality of \[
\left| \mathcal{D}_1^{L_5\rightarrow Y_6}\right| = \binom{5 + 12 - 2}{12 - 2} = \frac{(5 + 12 - 2)!}{(12 - 2)! \, 5!}=\binom{15}{10}=3003.\]

To show how Algorithm 2 works, let us suppose that we want to know the discrete fuzzy number of lattice $ (\mathcal{D}_1^{L_5\rightarrow Y_6}, t\text{-inc})$ that occupies position 50, that is, we want to know the value of $pos^{-1}(50)$. 

We follow Algorithm~\ref{unrank-by-cuts} step by step:

\paragraph{Initialization and first iteration}

\begin{enumerate}[label=\arabic*.]
    \item Initialize: $I_7 = \emptyset$
    \item Set $k_7 = 21$
    \item Let $j = 6$
    \item The set of candidate intervals is:
    \[
    C_6 = \{[0,5] \prec [0,4] \prec [0,3] \prec [0,2] \prec [0,1] \prec [0,0] \prec [1,1] \prec \ldots \prec [5,5]\}
    \]
    \item For each interval $[a_j, b_j]$, we compute:
    \[
    |\setGdfn(a_j, b_j, 6)| = \binom{a_j + 4}{4} \binom{5 - b_j + 4}{4} = \binom{a_j + 4}{4} \binom{9 - b_j}{4}
    \]

    The relevant values and their accumulated sums are:

    \begin{center}
    \begin{tabular}{|c|c|c|c|}
    \hline
    Index & Interval & $|\setGdfn|$ & Accumulated \\
    \hline
    1 & $[0,5]$ & 1 & 1 \\
    2 & $[0,4]$ & 5 & 6 \\
    3 & $[0,3]$ & 15 & 21 \\
    4 & $[0,2]$ & 35 & 56 \\
    \hline
    \end{tabular}
    \end{center}

    \item We find $k_6 = 4$, since it's the smallest $l$ such that:
    \[
    \sum_{k=1}^{l} |\setGdfn(a_k^{(6)}, b_k^{(6)}, 6)| \geq i = 50
    \]

    \item Set the $\alpha$-cut:
    \[
    A^{y_6} = A^1 = \text{core}(A) = I_4 = [0,2]
    \]

    \item Update the index:
    \[
    i \gets 50 - \sum_{k=1}^{3} |\setGdfn(a_k^{(6)}, b_k^{(6)}, 6)| = 50 - 21 = 29
    \]
\end{enumerate}

\paragraph{Next Iteration}

\begin{enumerate}[label=\arabic*.]
\setcounter{enumi}{8}
    \item Let $j = 5$
    \item Candidate intervals:
    \[
    C_5 = \{[0,5] \prec [0,4] \prec [0,3] \prec [0,2]\}
    \]
    \item Compute:
    \[
    |\setGdfn(a_j, b_j, 5)| = \binom{a_j + 3}{3} \binom{5 - b_j + 3}{3} = \binom{a_j + 3}{3} \binom{8 - b_j}{3}
    \]

    Table of values:

    \begin{center}
    \begin{tabular}{|c|c|c|c|}
    \hline
    Index & Interval & $|\setGdfn|$ & Accumulated \\
    \hline
    1 & $[0,5]$ & 1 & 1 \\
    2 & $[0,4]$ & 4 & 5 \\
    3 & $[0,3]$ & 10 & 15 \\
    4 & $[0,2]$ & 20 & 35 \\
    \hline
    \end{tabular}
    \end{center}

    \item We find $k_5 = 4$, since:
    \[
    \sum_{k=1}^{4} |\setGdfn(a_k^{(5)}, b_k^{(5)}, 5)| \geq i = 29
    \]

    \item Set:
    \[
    A^{y_5} = A^{0.8} = [0,2]
    \]

    \item Update:
    \[
    i \gets 29 - \sum_{k=1}^{3} |\setGdfn(a_k^{(5)}, b_k^{(5)}, 5)| = 29 - 15 = 14
    \]
\end{enumerate}

\paragraph{Remaining $\alpha$-cuts}

\[
A^{y_4} = A^{0.6} = [0,2], \quad
A^{y_3} = A^{0.4} = [0,3], \quad
A^{y_2} = A^{0.2} = [0,5], \quad
A^{y_1} = A^{0} = [0,5].
\]

\paragraph{Final dfn}
Taking into account the previous result and Lemma \ref{descomposicio_dfn}, we obtain 
\[
A = \{1/0, 1/1, 1/2, 0.4/3, 0.2/4, 0.2/5\}\in \mathcal{D}_1^{L_5\rightarrow Y_6}
\]
\end{example}
The following example illustrates how Algorithm 3 works.
\begin{example}
\label{ex:academic_example2}
In this case, we will consider the same lattice and total order as established in the previous example. Thus, let us consider the lattice $(\mathcal{D}_1^{L_5\rightarrow Y_6}, \Delta^{\downarrow}_{\delta})$, where  $L_5=\{0,1,2,3,4,5\}$, $Y_6=\{y=1=0<y_2=0.2<y_3=0.4<y_4=0.6<y_5=0.8<y_6=1\}$ and $\Delta^{\downarrow}_{\delta}=t\text{-inc}$ is the intervalar order. 

In this case, given a discrete fuzzy number~$A\in \mathcal{D}_1^{L_5\rightarrow Y_6} $, we want to compute the index $i$ such that $\mathrm{pos}^{-1}(i) = A$, or equivalently, $\mathrm{pos}(A) = i$.

Let us consider the following discrete fuzzy number:
\[
A = \{1/0,\, 1/1,\, 1/2,\, 0.2/3,\, 0/4,\, 0/5\}\in \mathcal{D}_1^{L_5\rightarrow Y_6}
\]

The relevant $\alpha$-cuts are:
\[
\begin{aligned}
A^{y_6} &= A^{1} = [0,2], \quad
A^{y_5} = A^{0.8} = [0,2], \quad
A^{y_4}  = A^{0.6} = [0,2], \\
A^{y_3} & = A^{0.4} = [0,2], \quad
A^{y_2} = A^{0.2} = [0,3], \quad
A^{y_1} = A^{0} = [0,5].
\end{aligned}
\]

We now follow Algorithm~\ref{rank-i-dfn} step by step.

\paragraph{Initialization and first iteration}

\begin{enumerate}
    \item Initialize $i \gets 1$
    \item Initialize $A^{y_7} \gets \emptyset$
    \item Let $j = 6$
    \item Candidate intervals:
    \[
    C_6 = \left\{ I_l \in \mathcal{I}_5 \,\middle|\, I_l \prec A^{y_6} = [0,2] \right\} = \{[0,5] \prec [0,4] \prec [0,3] \prec [0,2]\}
    \]
    \item For each interval $[a_j, b_j]$, compute:
    \[
    |\setGdfn(a_j, b_j, 6)| = \binom{a_j + 4}{4} \binom{9 - b_j}{4}
    \]
    From previous computations, $s_6 = 4$
    \item Update:
    \[
    i \gets i + \sum_{k=1}^3 |\setGdfn(a_k^{(6)}, b_k^{(6)}, 6)| = 1 + 21 = 22
    \]
\end{enumerate}

\paragraph{Next Iteration ($j = 5$)}

\begin{enumerate}[label=\arabic*.]
\setcounter{enumi}{6}
    \item Candidate intervals:
    \[
    C_5 = \left\{ I_l \in \mathcal{I}_5 \,\middle|\, A^{y_6} \subseteq I_l,\ I_l \prec A^{y_5} = [0,2] \right\} = \{[0,5] \prec [0,4] \prec [0,3] \prec [0,2]\}
    \]
    \item Compute:
    \[
    |\setGdfn(a_j, b_j, 5)| = \binom{a_j + 3}{3} \binom{8 - b_j}{3}
    \]
    From previous computations, $s_5 = 4$
    \item Update:
    \[
    i \gets 22 + \sum_{k=1}^3 |\setGdfn(a_k^{(5)}, b_k^{(5)}, 5)| = 22 + 15 = 37
    \]
\end{enumerate}

The following iterations are:

\begin{center}
\begin{tabular}{|c|c|}
\hline
\textbf{Value of $j$} & \textbf{Value of $i$} \\
\hline
4 & 47 \\
3 & 53 \\
2 & 55 \\
1 & 55 \\
\hline
\end{tabular}
\end{center}

Therefore, the final value of $i$ is:
$i = 55$, that is, the position of discrete fuzzy number \[
A = \{1/0,\, 1/1,\, 1/2,\, 0.2/3,\, 0/4,\, 0/5\}
\] in the lattice $(\mathcal{D}_1^{L_5\rightarrow Y_6}, \Delta^{\downarrow}_{\delta})$ is the 55th.
\end{example}

\section{Empirical scaling with fixed $n$}
\label{sec:empirical}
To validate the theoretical time complexity derived in Section~\ref{sec:cost_inv_pos}, we conducted empirical experiments assessing the scalability of the proposed ranking and unranking algorithms under varying parameter sizes.

\subsection{Experimental Setup and Numerical Results}

This subsection provides the numerical
results obtained from the empirical evaluation of the proposed ranking and unranking algorithms.
All experiments were implemented in Python~3.11 and executed on an Intel i7--12700H CPU 
(2.7\,GHz, 16\,GB RAM). 

We have selected the interval-based order t-inc, as defined in Subsection~\ref{sub:partial_total_orders}.
For each configuration, we fixed $n = 10$ and varied $m$ in the range ${100, 200, \ldots, 1000}$, that is, from $m = 100$ to $m = 1000$ in increments of $100$. For every value of $m$, we performed $K = 500$ independent 
trials using uniformly sampled indices $i \in \{0, \ldots, N-1\}$, where $N = |\subsetdfn|$ 
denotes the total number of discrete fuzzy numbers of the set $\subsetdfn$. 

Each reported value corresponds to the sample mean $\bar{t}_m$ of the $K$ trials, accompanied 
by its standard deviation $s_m$. The standard error of the mean is given by $s_m / \sqrt{K}$, 
so the relative uncertainty decays as $1 / \sqrt{K}$. Table~\ref{tab:timing_results} summarizes the
average execution times (in milliseconds) for different values of $m$, providing a quantitative
view of the algorithm’s scalability. 

As shown in  Table \ref{tab:timing_results}, the average runtime grows almost linearly with $m$, in agreement with
the theoretical time complexity $O(m)$ for fixed $n$ derived in Section ~\ref{sec:cost_inv_pos}.

\begin{table}[h!]
\centering
\caption{Average execution time as a function of $m$ (for fixed $n=10$). Results over $K=500$ trials; order=t-inc.}
\label{tab:timing_results}
\begin{tabular}{r r r r r}
\toprule
   $m$ &  unrank $\bar{t}_m$ (ms) &  unrank sm (ms) &  rank $\bar{t}_m$ (ms) &  rank sm (ms) \\
\midrule
 100 &            0.870 &           0.172 &          0.810 &         0.173 \\
 200 &            1.713 &           0.322 &          1.565 &         0.310 \\
 300 &            2.660 &           0.503 &          2.426 &         0.481 \\
 400 &            3.554 &           0.768 &          3.255 &         0.742 \\
 500 &            4.486 &           0.932 &          4.113 &         0.888 \\
 600 &            5.392 &           1.080 &          4.955 &         1.049 \\
 700 &            6.253 &           1.300 &          5.746 &         1.263 \\
 800 &            7.207 &           1.396 &          6.606 &         1.349 \\
 900 &            7.924 &           1.600 &          7.272 &         1.553 \\
1000 &            8.960 &           1.822 &          8.205 &         1.739 \\
\bottomrule
\end{tabular}
\end{table}

These results empirically confirm the theoretical $O(n^2\log(n) m)$ bound, and show that for a fixed 
$n$ the dependence on $m$ is effectively linear. Both ranking and unranking procedures therefore 
exhibit predictable and scalable performance, validating the efficiency of the proposed method.

\subsection{Graphical Analysis and Empirical Scaling Behavior}

Figure~\ref{fig:scaling_plots} illustrates the empirical scaling behavior of the proposed
ranking and unranking algorithms for fixed $n = 10$ and varying $m$.
Panel~(a) shows the average execution time $\bar{t}_m$ (in milliseconds) as a function of $m$,
while panel~(b) presents the same data in a log--log plot together with a linear regression fit for both ranking and unranking strategies.

The approximately unit slope observed in panel~(b) confirms the expected linear dependence
of the runtime on $m$ for fixed $n$, as predicted by the theoretical analysis in 
Section~\ref{sec:cost_inv_pos}. Minor deviations from perfect linearity for small values of $m$
are attributed to fixed computational overheads (e.g., interpreter latency, memory allocation, 
and cache effects) that dominate when the problem size is small.
As $m$ increases, these constants become negligible and the empirical slope converges 
to the theoretical value of~1.

\begin{figure}[t]
\hspace{-1cm}
  \begin{minipage}{0.49\linewidth}
   \includegraphics[width=1.1\linewidth]{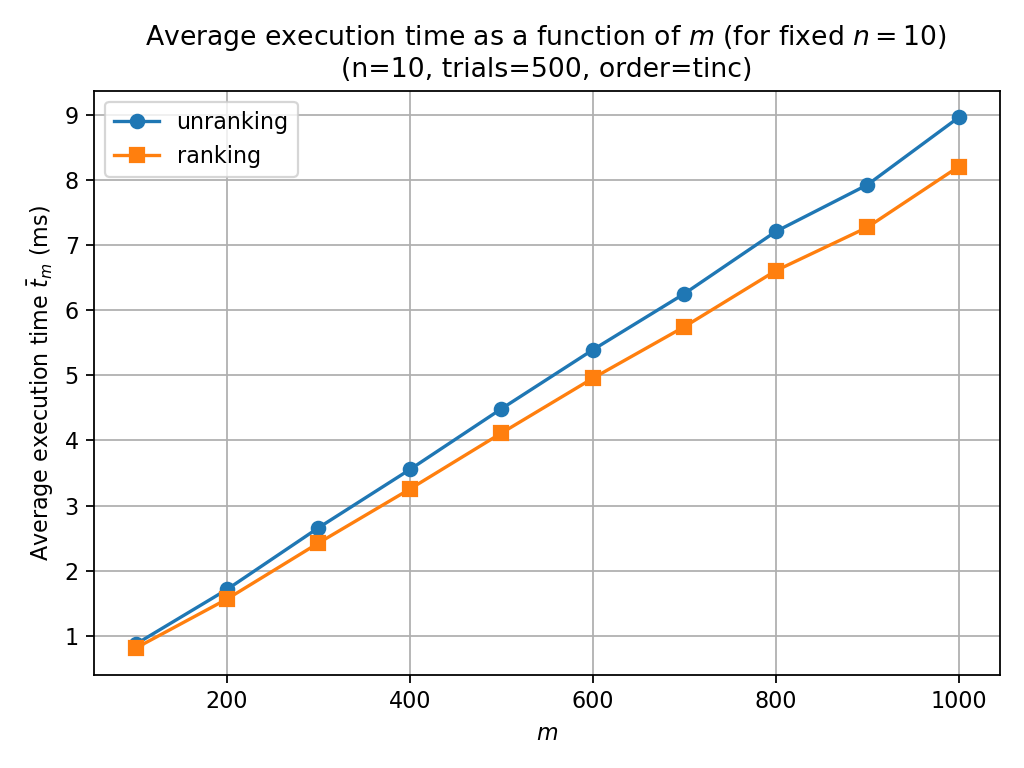}
    \subcaption{}
    \label{fig:times-vs-m}
  \end{minipage}
  \hfill
  \begin{minipage}{0.49\linewidth}
    \includegraphics[width=1.1\linewidth]{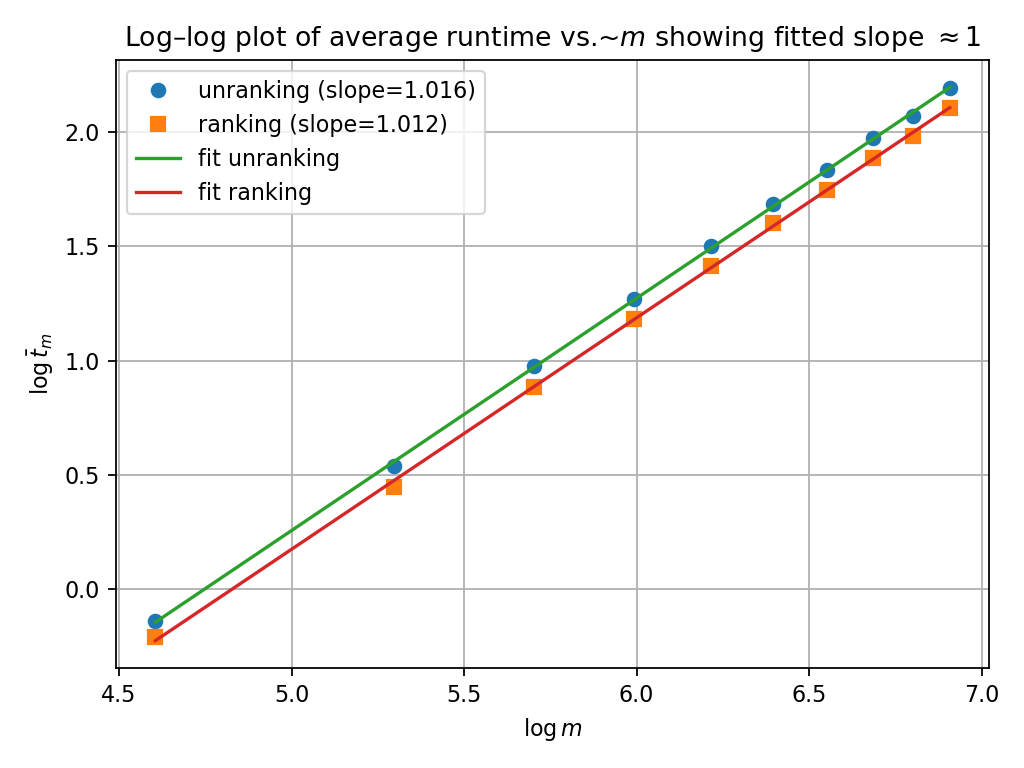}
    \subcaption{}
    \label{fig:loglog}
  \end{minipage}
  \caption{(a) Empirical scaling of ranking and unranking algorithms for fixed $n=10$, and (b) Log--log plot with linear fit showing slope $\approx 1$.}
  \label{fig:scaling_plots}
\end{figure}

Overall, these results corroborate the theoretical $O(n^2\log(n) m)$ bound, and demonstrate that,
for a fixed $n$, the proposed algorithms scale linearly with $m$ both in practice and in theory.

\section{Conclusions and Future Work}\label{sec:conclusions}
Motivated by recent methods \cite{Gonzalez2025,DeMiguel2024} for constructing implication and aggregation functions in the set \( \subsetdfn \) based on analogous operators defined on a finite chain \( L_k \), where \( k = \left| \subsetdfn \right| - 1 \), and by the fact that once a total (admissible) order \( \orderDfnEq \) is fixed, a bijective application
\[
\begin{array}{rccl}
pos: & \subsetdfn & \longrightarrow & L_k \\
     & A          & \mapsto        & pos(A)=\left|\{X \in \subsetdfn \mid X \orderDfnEq A\}\right| - 1,
\end{array}
\]
can be established, which enables us to determine the position of a discrete fuzzy number \( A \in (\subsetdfn, \orderDfnEq) \), as well as its inverse, which allows us to obtain, for any specific position \( 0\leq j \leq k \), the discrete fuzzy number \( A \in \subsetdfn \) that corresponds to that position. For this reason, in this paper, we have presented a deterministic, order-consistent algorithms for modeling the \textit{pos} bijection. This construction turns the abstract lattice structure into an operational indexing tool, allowing direct access, enumeration, and reconstruction of discrete fuzzy numbers without loss of generality. 

Two mutually inverse procedures have been developed: the\textit{ ranking function} $\mathrm{pos}$, which assigns to each dfn its exact position in the ordered lattice, and \textit{the unranking function} $\mathrm{pos}^{-1}$, which reconstructs the dfn from a given index. Both rely on closed-form combinatorial counts over nested $\alpha$-cuts, avoiding exhaustive enumeration while preserving exactness. In computational terms, the algorithms run in deterministic time $O(n^2\log n \, m)$ with $O(1)$ extra memory; empirical results confirm the near-linear scaling predicted in $m$ for fixed $n$, thus validating the practical efficiency of the approach. 

As future work, we plan to extend the proposed framework to the domain of discrete 
Z-numbers. Since a Z-number consists of a pair of fuzzy descriptors---one representing 
the value component and the other the reliability component---ordering such objects 
requires simultaneously comparing two structured sources of information. Our intention 
is to incorporate the total orders studied in~\cite{Mir-Fuentes2024} into this 
setting, adapting them so that they remain consistent with the two-level semantics of 
Z-numbers.

The idea is to construct an efficient ranking--unranking mechanism on the space of 
discrete Z-numbers, in analogy with the $\mathrm{pos}$--$\mathrm{pos}^{-1}$ methodology 
developed in this work for classical discrete fuzzy numbers. To achieve this, we may 
need to restrict the membership values of one or both components to finite sets 
(similarly to the set $Y_m$ used throughout the paper), ensuring that the resulting 
combinatorial structure remains tractable. Under these restrictions, it should be 
possible to characterize the cardinality of the admissible family of Z-numbers and to 
derive closed-form counting formulas analogous to those obtained for discrete fuzzy 
numbers.

Such a ranking--unranking scheme would enable the construction of implication and 
aggregation functions directly on the ordered lattice of discrete Z-numbers, extending 
the approaches in~\cite{Gonzalez2025,DeMiguel2024}. We expect this extension to 
provide a unified and computationally efficient framework for modeling Z-valued 
information in decision-making and linguistic reasoning contexts.

\backmatter

\bmhead{Supplementary information}
To support full reproducibility of our results, we provide an open-source GitHub repository that contains the implementation of the ranking and unranking algorithms, as well as all experimental scripts and datasets: \url{https://github.com/AlejandroMus/dfn-ranking-unranking}.

\bibliography{sn-bibliography}

\end{document}